\newtheorem{theorem}{Theorem}
\newtheorem{lemma}{Lemma}
\newcommand{\bra}[1]{\langle #1 |}
\newcommand{\ket}[1]{| #1 \rangle}
\begin{document}

\title{Best Separable Approximation of multipartite diagonal symmetric states}

\author{Ruben Quesada}
\affiliation{Departament de F\'isica. Universitat Aut\`onoma de Barcelona 08193 Bellaterra, Spain}
\author{Anna Sanpera}
\affiliation{ICREA, Instituci\'o Catalana de Recerca i Estudis Avan\c cats, Barcelona, Spain}
\affiliation{Departament de F\'isica. Universitat Aut\`onoma de Barcelona 08193 Bellaterra, Spain}
%\affiliation{ICREA, Instituci\'o Catalana de Recerca i Estudis Avan\c cats, Barcelona, Spain}

%%%%%%%%%%%%%%%%%%%%%%%%%%%%%%%%%%%%%%%%%%%%%%%%%%%%%%%%%%%%%%%%%%%%%%%%%%%%%%%%%%%%%%%%%%%%%%%%%%
%				ABSTRACT
%%%%%%%%%%%%%%%%%%%%%%%%%%%%%%%%%%%%%%%%%%%%%%%%%%%%%%%%%%%%%%%%%%%%%%%%%%%%%%%%%%%%%%%%%%%%%%%%%%
\begin{abstract}
The structural study of entanglement in multipartite systems is hindered by the lack of necessary and sufficient operational criteria able to discriminate among the various entanglement properties of a given mixed state.  Here, we pursue a different route to the study of multipartite entanglement based on the closeness of a multipartite state to the set of separable ones. In particular, we analyze multipartite diagonal symmetric $N$ qubit states and provide the analytical expression for their Best Separable Approximation (BSA [Phys. Rev. Lett. {\bf 80}, 2261 (1998)]), that is, their unique convex decomposition into a separable part and an entangled one with maximal weight of the separable one.
\end{abstract}

%-------------------------------------------------------------------------------------

\keywords{} \pacs{03.67.Mn, 03.67.-a, 03.65.Ud}
%-------------------------------------------------------------------------------------

\maketitle

%%%%%%%%%%%%%%%%%%%%%%%%%%%%%%%%%%%%%%%%%%%%%%%%%%%%%%%%%%%%%%%%%%%%%%%
%				I. INTRODUCTION
%%%%%%%%%%%%%%%%%%%%%%%%%%%%%%%%%%%%%%%%%%%%%%%%%%%%%%%%%%%%%%%%%%%%%%%

\section{Introduction}
Despite the spectacular success in developing, applying and even anticipating applications of quantum correlations, the study of multipartite entanglement remains an odd subject that often challenges our current understanding on the subject. While the structure of quantum states in the bipartite scenario is  well understood and several measures of entanglement have been firmly established, the multipartite setting remains still quite unexplored \cite{Horodecki09, Huber13, Guhne13}.

One very useful approach in the study of bipartite entanglement  comes from the so-called  Best Separable Approximation (BSA) of a density matrix \cite{Lewenstein98}. The BSA approximation of any state $\rho$ acting on a bipartite system ${\cal{H}}=\mathbb{C}^{d_{1}}\otimes \mathbb{C}^{d_{2}}$ is its convex decomposition as $\rho=\lambda\rho_S+(1-\lambda){\rho}_E$,  where $\rho_S$ is a separable density matrix, $\rho_E$ an entangled one and the weight $\lambda$ of the separable part is maximal. 
The state $\rho_S$ is then called the BSA of $\rho$. It is important to remark that although there exist many different decompositions of the form given above, the BSA is always unique, i.e. there exist a unique decomposition with maximal $\lambda$ \cite{Karnas01}. This uniqueness has been exploited to demonstrated important separability criteria in $2\times N$ systems as well as $N\times M$ \cite{Kraus00}. 
Further, it has been shown that when $\rho$ acts in $\mathbb{C}^{2}\otimes \mathbb{C}^{2}$, the BSA of any generic entangled state $\rho$ always leads to $\rho_E=\ket{\psi}\bra{\psi}$, that is, to a single projector into a pure entangled state  \cite{Lewenstein98}. 
Thus, for a qubit pair, the BSA provides trivially a measure of entanglement which can be expressed as $R(\rho)=(1-\lambda)S(\psi)$, where $S(\psi)$ measures the entanglement of $\ket{\psi}$ via the Von Neumann entropy of any of its subsystems. Later on, a remarkable connection between the BSA and the concurrence $C(\rho)$ was presented in \cite{Wellens01} namely, if $\rho$ has full rank (i.e. rank
4), then $\ket{\psi}$ is maximally entangled, and the BSA measure coincides with that of the concurrence i.e. $R(\rho) = C(\rho)$.  In this case, the BSA generates a pure-state decomposition of $\rho$ that minimizes the average concurrence, since the decomposition consists of a single pure state with $C (\psi)= 1$ and weight $(1-\lambda)$, together with other states all having $C(\psi) = 0$ \cite{Wootters02}. More recently, a connection between the BSA and the  max- relative entropy  defined as $D_{\max}(\tau || \sigma) \equiv \log \min \{\lambda : \tau \leq \lambda \sigma\}$, for $\tau, \sigma \geq 0$ and $\text{supp}\, \tau 	\subseteq \text{supp}\, \sigma $  has been found \cite{Datta13private}.
The max- relative entropy, $D_{max}(\rho_S||\rho)$, can be interpreted as the maximal probability in which a state $\rho_S$ appears in the convex decomposition of $\rho$. Clearly, maximizing over the set of separable states 
is by definition the {\it separability} $\lambda$ of $\rho$, i.e., $\lambda=\max (2^{-D_{max}(\rho_S||\rho)})$. The max- relative entropy plays the role of a parent entropy for some relevant quantum relative entropies \cite{Datta09,Renner09}.\\
Our aim here is to study if there exist a BSA in the multipartite case. That is, given a state $\rho$ acting on $\mathbb{C}^{d_{1}}\otimes \mathbb{C}^{d_{2}}\otimes\cdots\otimes\mathbb{C}^{d_{n}}$  we inquire if there exist a unique decomposition $\rho=\lambda{\rho}_S+(1-\lambda){\rho}_E$ with maximal $\lambda$. Notice that generically, such a decomposition cannot be determined due to the presence of entangled states that are positive under partial transposition (PPT entangled)\cite{Peres96,Horodecki96}. Here we focus on $N$-qubit symmetric states, whose separability conditions are less stringent due to the underlying symmetry. We demonstrate that in some cases it is possible to derive analitically the BSA for an arbitrary number of subsystems $N$.\\
The paper is organized as follows: in Section II we first review the notion of Best Separable Approximation to a given state. In Section III, we introduce  the subspace of symmetric N-qubit states, in particular the diagonal symmetric. In Section IV, we prove constructively the existence of the BSA for N-qubit diagonal symmetric states assuming the state is full NPT. We provide also the explicit expressions for the first non trivial case, namely $\mathbb{C}^{2}\otimes \mathbb{C}^{2}\otimes \mathbb{C}^{2}$. In Section V we conclude and list the open questions.

%%%%%%%%%%%%%%%%%%%%%%%%%%%%%%%%%%%%%%%%%%%%%%%%%%%%%%%%%%%%%%%%%%%%%%%
%				II. Best Separable Approximation (BSA)
%%%%%%%%%%%%%%%%%%%%%%%%%%%%%%%%%%%%%%%%%%%%%%%%%%%%%%%%%%%%%%%%%%%%%%%
\section{Best Separable Approximation (BSA)} 
The BSA approach of a given state $\rho$ is based on subtracting maximally projectors onto product states contained on its range such that the remainder remains positively defined. Formally that can be expressed as: given a bipartite quantum system $\rho$ acting in a Hilbert space, $\cal{H}$ of finite dimension, and a set $V$ of product vectors contained on its range, i.e. $\ket{e_\alpha,f_\alpha} \in \cal{R}(\rho)$, there exists a separable state $\tilde{\rho}_S = \sum \Lambda_\alpha P_\alpha$, called the optimal separable approximation (OSA) of $\rho$, and a decomposition $\rho =\tilde{\rho}_S+\tilde{\rho}_E = \lambda {\rho}_S + (1-\lambda) {\rho}_E$, where $\rho_E$ does not contain any product vector on its range and $\lambda= Tr(\tilde{\rho}_S)$, where the tilde over a density matrix means it is not normalized. If the set $V$ includes {\it all product vectors} in $\mathcal{R}(\rho)$ then $\rho_S$ is called the best separable approximation (BSA) of $\rho$, and it is {\it unique} for every density matrix $\rho$ {\cite{Karnas01}}.
The conditions that a separable state $\rho_S$ has to fulfill in order to be the BSA of a given entangled state $\rho$ were given and proved in \cite{Lewenstein98}. They reduce to find the conditions for subtracting maximally (i) a projector $P$ onto a product state $\ket{p} \in \mathcal{R}(\rho)$ as well as (ii) a pair of projectors $P_1$, $P_2$ onto product vectors $\ket{p_1}$, $\ket{p_2} \in \mathcal{R}(\rho)$. For the easiness of our results we have included these maximality conditions in Appendix \ref{App:AppendixA} of the manuscript.

%%%%%%%%%%%%%%%%%%%%%%%%%%%%%%%%%%%%%%%%%%%%%%%%%%%%%%%%%%%%%%%%%%%%%%%
%				III. Symmetric States  
%%%%%%%%%%%%%%%%%%%%%%%%%%%%%%%%%%%%%%%%%%%%%%%%%%%%%%%%%%%%%%%%%%%%%%%
\section{Symmetric States}
General $N$-qubit symmetric states $\rho$ are defined as states lying in a subspace $\mathcal{S}_N$ of the Hilbert space $\mathcal{H}_{2,N} = (\mathbb{C}^2)^{ \otimes N}$ and fulfilling the relationship
\begin{equation}
\label{permutation}
V_{\sigma}\rho=\rho V_{\sigma'}^\dag = \rho
\end{equation}
for any permutations $\sigma,\sigma'$ over the $N$-element set, and where the operator $V_\sigma$ is defined as $V_\sigma \ket{\psi_1}\cdots\ket{\psi_N} = \ket{\psi_{\sigma(1)}}\cdots \ket{\psi_{\sigma(N)}}$, for $\ket{\psi_k} \in \mathbb{C}^2$ \cite{Tura12}. The normalized vectors that span $\mathcal{S}_N$ are
\begin{equation}
\label{N-basis}
\ket {D _k^N} = \frac{1}{{\sqrt {{C_k^N}} }}\sum\limits_{\sigma} {\ket {{1^k}{0^{N - k}}}} \,\,\,\,\,\,(k=0,\dots,N),
\end{equation}
where $1^k0^{N-k}$ denotes $1\mathop{\cdots} \limits^k 1 0\mathop{  \cdots} \limits^{N - k} 0$. The sum in (\ref{N-basis}) is carried over every different permutation $\sigma$ and ${C_k^N} \equiv \binom{N}{k}$. The dimension of $\mathcal{S}_N$ is $N+1$ we have that $\mathcal{S}_N \cong \mathbb{C}^{N+1}$ and thus for every bipartition $S|\bar{S}$ we have $\mathcal{S}_N \cong \mathbb{C}^{|S|+1}\otimes \mathbb{C}^{N-|S|+1}$.
The elements of the basis, also known as Dicke states, have many applications for quantum information processing \cite{Xing12,Prevedel09}.
If we consider the partition with $|S|=1$, it is straightforward to see that $S_N \simeq \mathbb{C}^2\otimes \mathbb{C}^N$, and our multipartite symmetric state can be thought as a bipartite one. Within this partition, the elements of the basis can be rewritten as
\begin{equation*}
\ket{D_k^N} = \frac{1}{\sqrt{C_k^N}}(\ket{0}\sqrt{C_k^{N-1}}\ket{D_k^{N-1}} + \ket{1}\sqrt{C_{k-1}^{N-1}}\ket{D_{k-1}^{N-1}}),
\end{equation*}
which using the substitution $\ket{\overline{k}} = \ket{D_k^{N-1}}$ can be readily expressed as
\begin{equation}
\label{2xNBasis}
\ket{D_k^{N}} = \sqrt{\frac{N-k}{N}}\ket{0}\ket{\overline k}+\sqrt{\frac{k}{N}}\ket{1}\ket{\overline {k-1}},
\end{equation}
for $k=0,\dots,N$.\\
In this paper we will focus on a particular subset of the symmetric states: the diagonal symmetric.  Let us denote by $\mathcal{P}_N$ the projector space onto $\mathcal{S}_N$, which is spanned by $\{\ket{D_k^N}\bra{D_k^N}\}$, for $k=0,\dots,N$. Hence, a general mixed state ${\rho}\in \mathcal{P}_N$, called Diagonal Symmetric State (DSS), is defined as
\begin{equation}
\label{2xN}
{\rho} = \sum\limits_{k = 0}^{N}{p_k\ket{D_k^N}\bra{D_k^N}},
\end{equation}
where $p_k \in [0,1]$ denote probabilities and $\sum p_k = 1$.\\
To illustrate the above description let us consider the case $N=3$ as it is the first non-trivial case and the last one in which the Peres-Horodecki (PH) criterion is necessary and sufficient. From (\ref{N-basis}) we see that $\{\ket{D_k^N}\}$ is given by
\begin{align}
\ket{D_0^3} &=\ket{000}\\
\ket{D _1^3} &= \frac{1}{{\sqrt 3 }}(\ket{001} + \ket{010} + \ket{100})\\
\ket{D _2^3} &= \frac{1}{{\sqrt 3 }}(\ket{011} + \ket{101} + \ket{110})\\
\ket{D _3^3} &=\ket{111}.
\end{align}

Using the identifications $\ket{\bar{0}} \equiv \ket{00}$, $\ket{\bar{1}} \equiv \frac{1}{\sqrt{2}}(\ket{01}+\ket{10})$ and $\ket{\bar{2}} \equiv \ket{11}$, we obtain a redefinition of the above basis as a bipartite $2\otimes 3$ system:
\begin{align}
\ket{D_0^3} &=\ket{0\bar{0}}\\
\ket{D _1^3} &= \frac{1}{{\sqrt 3 }}(\sqrt{2}\ket{0\bar{1}} + \ket{1\bar{0}})\\
\ket{D _2^3} &= \frac{1}{{\sqrt 3 }}(\ket{0\bar{2}} +\sqrt{2} \ket{1\bar{1}})\\
\ket{D _3^3} &=\ket{1\bar{2}}.
\end{align}
With the basis defined above for the $2\otimes 3$ case, a generic density matrix in the diagonal symmetric subspace takes the form:
\begin{equation}
\label{2x3Matrix}
\rho = \left[ {\begin{array}{*{20}{c}}
p_0 & 0 & 0 & 0 & 0 & 0\\
0    &  \frac{2}{3}p_1 & 0 & \frac{\sqrt{2}}{3}p_1 & 0 & 0\\
0 & 0 & \frac{1}{3}p_2 & 0 & \frac{\sqrt{2}}{3}p_2 & 0\\
0 & \frac{\sqrt{2}}{3}p_1 & 0 & \frac{1}{3}p_1 & 0 & 0\\
0 & 0 & \frac{\sqrt{2}}{3}p_2 & 0 & \frac{2}{3}p_2 & 0\\
0 & 0 & 0 & 0 & 0 & p_3
\end{array}} \right].
\end{equation}

%%%%%%%%%%%%%%%%%%%%%%%%%%%%%%%%%%%%%%%%%%%%%%%%%%%%%%%%%%%%%%%%%%%%%%%
%				IV. BSA OF DIAGONAL SYMMETRIC STATES  
%%%%%%%%%%%%%%%%%%%%%%%%%%%%%%%%%%%%%%%%%%%%%%%%%%%%%%%%%%%%%%%%%%%%%%%
\section{BSA of Diagonal Symmetric States}

The characterization of entanglement in $\mathcal{S}_N$ has been extensively addressed in recent years. Despite substantial progress (see \cite{Tura12,Augusiak12} and references therein)  full characterization of entanglement even in this simple case has  still not been achieved. The difficulty arises from the existence of PPT entangled states that are shown to exist for all N-qubit symmetric spaces with $N\geq 4$ \cite{Tura12}.\\
It is also known that if $\rho$ is PPT,  supported on $\mathbb{C}^2\otimes \mathbb{C}^N$ and has rank  $r(\rho)=N$, then the state is separable \cite{Kraus00}. Here we assume the generic case in which all probabilities $p_{k}\neq 0$, so that  $\rho$ has full rank, i.e. $r(\rho)=N+1$.
Moreover, in order to ensure the state is entangled we impose that $\rho^{T_B}$ ($\rho^{T_A}$) is not definite positive. Expressing $\rho$ in $\mathbb{C}^2 \otimes \mathbb{C}^N$ the conditions under which it happens are given by a set of inequalities
\begin{equation}
\label{NPTConds}
k(N-k)p_k^2 > (k+1)(N-k+1)p_{k-1}p_{k+1},
\end{equation}
for $k=1,\dots,N-1$. \\
The fulfillment of each of them imposes one negative eigenvalue for $\rho^{T_B}$ up to $N-1$ values, which is the maximum number of negative eigenvalues the partial transpose of a $2\times N$ state can have \cite{Rana13}. Whenever condition (\ref{NPTConds})  is fulfilled for every $k$, it can be rewritten as a different set of conditions akin to the first one (see Appendix B for the proof):
\begin{equation}
\label{Entang_conds}
p_k>C^N_k p_0^{\frac{N-k}{N}}p_N^{\frac{k}{N}},
\end{equation}
for $k=1,\dots,N-1$.

To proceed further we look for generic product vectors in the range of $\rho$. It is quite straightforward  to see that a generic product vector $\ket{p} = (X,e^{i\phi}Y)^{\otimes N} \in \mathcal{S}_N$, when expressed in $\mathbb{C}^2 \otimes \mathbb{C}^N$ reads	
\begin{eqnarray}
\ket{p} &=&  \left[ {\begin{array}{*{20}{c}}
  X \\ 
  e^{i\phi}Y 
\end{array}} \right] \otimes \left[ {\begin{array}{*{20}{c}}
  {{X^{N - 1}}} \\ 
  {\sqrt {N - 1}e^{i\phi} {X^{N - 2}}Y} \\ 
   \vdots  \\ 
  {\sqrt {C_k^{N - 1}}e^{ik\phi} {X^{N - k - 1}}{Y^k}} \\ 
   \vdots  \\ 
  {{e^{i(N-1)\phi}Y^{N - 1}}} 
\end{array}} \right]\\
\label{prod_vec}
&=& \sum\limits_{k = 0}^N {{e^{i\phi (N - k)}}\sqrt {C_k^N} {{X}^{N - k}}{Y^k}\ket{D _k^N}},
\end{eqnarray}
where $\phi, X$ and $Y$ are real parameters, and where the normalization can be included into the $X$ and $Y$. Among the infinite set of product vectors (\ref{prod_vec}), we choose those with the particular value $\phi=\frac{m\pi}{N}$:
\begin{equation}
\label{prod_vec2}
\ket{p_m} = \sum\limits_{k = 0}^N {\omega^{m(N-k)}\sqrt {C_k^N} {{X}^{N - k}}{Y^k}\ket{D _k^N}},
\end{equation}
for $m=0,\dots,2N-1$, where $\omega = e^{\frac{i\pi}{N}}$ is the primitive $2N$-root of unity ($\omega^{2N}=1$).
Product vectors (\ref{prod_vec2}) can be used to build the following separable (by construction) state
\begin{equation}
\label{rho_sep}
{\rho _S (X,Y)} \equiv \sum\limits_{m=0}^{2N - 1} {{\Lambda _m}{P_m}},
\end{equation}
where $P_m=\ket{p_m}\bra{p_m}$ are the projectors onto product vectors (\ref{prod_vec2}), and the  weights $\Lambda_m$ must fulfill $\sum \Lambda_m = 1$ and $\Lambda_m \geq 0$.

How much a generic product vector $\ket{p}=\sum {b_k \ket{D_k^N}} \in \mathcal{S}_N$ can be subtracted from the state $\rho = \sum {p_k \ket{D_k^N} \bra{D_k^N}}$ is given by equation (\ref{lemma1}) and reads
\begin{equation}
\label{LamdaEqual}
\Lambda=\frac{1} {\bra{p} \rho^{-1} \ket{p} } = \left(  \sum_k{ \frac{ |b_k|^2 } {p_k} }\right)^{-1},
\end{equation}
where we made use of the pseudoinverse
\begin{equation}
\rho^{-1} = \sum p_k^{-1} \ket{D_k^N}\bra{D_k^N}.
\end{equation}
In particular, since expression (\ref{LamdaEqual}) does not depend on the phase $\phi$  and  therefore, neither on $m$, it can be inferred that the $\Lambda_m$'s are the same for every $m$.

The normalization constraint $\sum_{m=1}^{2N} \Lambda_m = 1$ implies that $\Lambda_m=\frac{1}{2N}$, for all $m$. This fact, along with the relationship for $2N$-roots of unity, $\sum_{m=0}^{2N-1} \omega^{m(i-j)} = 2N\delta_{ij}$, shows that (\ref{rho_sep}) is diagonal in the $\{ \ket{D_k}\}$ basis, taking the following form
\begin{equation}
\label{RhoS}
\tilde{\rho}_S(X,Y)=\sum \limits_{k=0}^{N} C_k^N X^{2(N-k)}Y^{2k}\ket{D_k}\bra{D_k}.
\end{equation}
Additional constraints for $\tilde{\rho}_S$ to be the BSA of $\rho$ come by demanding $\rho-\tilde{\rho}_S \geq 0$, i.e. requiring $\tilde{\rho}_E \geq 0$ in the decomposition $\rho=\tilde{\rho}_S+\tilde{\rho}_E$, which leads to the conditions
\begin{equation}
\label{constraint}
p_k \geq C_k^N X^{2(N-k)}Y^{2k}\,\,\,\,\,\,\,(k=0,\dots,N).
\end{equation}

The similarity between constraints (\ref{constraint}) and conditions (\ref{Entang_conds}) leads naturally to the following choice
\begin{equation}
\begin{matrix}
\label{XY}
X^2 =p_0^{\frac{1}{N}} & & & & Y^2 = p_N^{\frac{1}{N}},
\end{matrix}
\end{equation}
which assures the positivity of the entangled part $\tilde{\rho}_E$ in the decomposition as far as the state $\rho$ is full NPT (\ref{Entang_conds}).
In what it follows we demonstrate that such a choice leads $\rho_S$ to be the BSA of $\rho$ and that it is valid only in the full NPT region of $\rho$. In other words we show that the region in which $\rho_S$ is the BSA of $\rho$ coincides with the region in which $\rho^{T_B}$ has the maximum number of negative eigenvalues. 

The main goal is now to prove that indeed for a generic $\rho \in \mathcal{S}_N$ of the form (\ref{2xN}), $\tilde{\rho}_S$ as constructed in (\ref{RhoS}) is its BSA under the assumption that $\rho$ is full NPT. According to \cite{Lewenstein98}, the weights $\Lambda_m$ of the projectors have to be maximal with respect to subtraction of one projector and subtraction of pairs of projectors. To this aim we prove first the following lemmas.
 
\begin{lemma}
\label{LemmaM}
Given $\rho_m \equiv \rho - \sum\limits_{j \ne m} \Lambda_j P_j = \rho-\tilde{\rho}_S+\frac{1}{2N}P_m$, then
\begin{equation}
\label{InverseMeq}
\bra{p_m}\rho_m^{-1}\ket{p_m} = 2N.
\end{equation}
\end{lemma}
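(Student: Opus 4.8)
The plan is to compute $\bra{p_m}\rho_m^{-1}\ket{p_m}$ directly from the pseudoinverse, exploiting the special structure that the choice (\ref{XY}) forces on the entangled part $\tilde{\rho}_E=\rho-\tilde{\rho}_S$. Indeed, with $X^2=p_0^{1/N}$ and $Y^2=p_N^{1/N}$ the two extreme diagonal entries in (\ref{RhoS}) collapse, $C_0^NX^{2N}=p_0$ and $C_N^NY^{2N}=p_N$, so that
\[
\tilde{\rho}_E=\sum_{k=1}^{N-1}\bigl(p_k-C_k^NX^{2(N-k)}Y^{2k}\bigr)\ket{D_k}\bra{D_k},
\]
whose range is exactly $\mathrm{span}\{\ket{D_1},\dots,\ket{D_{N-1}}\}$ (all coefficients being strictly positive by the full-NPT condition (\ref{Entang_conds})). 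By contrast, from (\ref{prod_vec2}) the product vector $\ket{p_m}$ has a nonzero $\ket{D_0}$ component, of modulus squared $C_0^NX^{2N}=p_0>0$, so no nonzero multiple of $\ket{p_m}$ lies in $\mathcal{R}(\tilde{\rho}_E)$. Moreover $\tilde{\rho}_E\geq 0$ (this is where full NPT enters), hence $\rho_m=\tilde{\rho}_E+\tfrac{1}{2N}P_m\geq 0$ and $\ket{p_m}\in\mathcal{R}(\tilde{\rho}_E)+\mathbb{C}\ket{p_m}=\mathcal{R}(\rho_m)$.

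Next I would set $\ket{u}\equiv\rho_m^{-1}\ket{p_m}$, the unique vector of $\mathcal{R}(\rho_m)$ with $\rho_m\ket{u}=\ket{p_m}$ (well defined because $\ket{p_m}\in\mathcal{R}(\rho_m)$), so that $\bra{p_m}\rho_m^{-1}\ket{p_m}=\braket{p_m}{u}$. Writing $\rho_m=\tilde{\rho}_E+\tfrac{1}{2N}\ket{p_m}\bra{p_m}$ and applying it to $\ket{u}$ yields
\[
\tilde{\rho}_E\ket{u}=\Bigl(1-\tfrac{1}{2N}\braket{p_m}{u}\Bigr)\ket{p_m}.
\]
The left-hand side always lies in $\mathcal{R}(\tilde{\rho}_E)$, while the right-hand side is a scalar multiple of $\ket{p_m}$; since (by the previous paragraph) the only such multiple in $\mathcal{R}(\tilde{\rho}_E)$ is the zero vector, the scalar must vanish, i.e. $\tfrac{1}{2N}\braket{p_m}{u}=1$, which is exactly (\ref{InverseMeq}).

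The step I expect to require the most care is the support claim: it is precisely the value (\ref{XY}) that annihilates the $\ket{D_0}$ and $\ket{D_N}$ components of $\tilde{\rho}_E$ while $\ket{p_m}$ keeps them nonzero; everything else is bookkeeping. Conceptually, Lemma~\ref{LemmaM} says that $\tfrac{1}{2N}$ is the \emph{maximal} weight of $P_m$ that can be subtracted from $\rho_m$ while keeping positivity (one sees this directly, since $\rho_m-(\tfrac{1}{2N}+\varepsilon)P_m=\tilde{\rho}_E-\varepsilon P_m$ has $\ket{D_0}\bra{D_0}$ entry $-\varepsilon p_0<0$); via equation (\ref{lemma1}) of Appendix~\ref{App:AppendixA} this is the first of the two Lewenstein--Sanpera maximality conditions needed for $\tilde{\rho}_S$ to be the BSA of $\rho$. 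An equivalent, purely computational route is to write $\ket{p_m}=\ket{v'}+\ket{w}$ with $\ket{v'}\in\mathrm{span}\{\ket{D_1},\dots,\ket{D_{N-1}}\}$ and $\ket{w}=b_0^{(m)}\ket{D_0}+b_N^{(m)}\ket{D_N}$, put $\rho_m$ in block form on $\mathcal{R}(\rho_m)=\mathrm{span}\{\ket{D_1},\dots,\ket{D_{N-1}}\}\oplus\mathbb{C}\ket{w}$, and observe that the rank-one term $\tfrac{1}{2N}\ket{v'}\bra{v'}$ cancels in the Schur complement of the $\ket{w}$-block; the block-inversion formula then gives $2N$ after the contributions $\sum_{k=1}^{N-1}|b_k^{(m)}|^2/(p_k-C_k^NX^{2(N-k)}Y^{2k})$ cancel.
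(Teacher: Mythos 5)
Your proof is correct, and it rests on exactly the same structural fact as the paper's: with the choice (\ref{XY}) one has $q_0=q_N=0$, so $\tilde{\rho}_E=\rho-\tilde{\rho}_S$ is supported on $\mathrm{span}\{\ket{D_1},\dots,\ket{D_{N-1}}\}$ while $\ket{p_m}$ retains nonzero $\ket{D_0}$ and $\ket{D_N}$ components. The execution differs slightly: the paper guesses and verifies the explicit preimage $\ket{\chi_m}=\frac{N}{b_0^{(m)*}}\ket{D_0}+\frac{N}{b_N^{(m)*}}\ket{D_N}$ (checking $P_m\ket{\chi_m}=2N\ket{p_m}$ and $(\rho-\tilde{\rho}_S)\ket{\chi_m}=0$) and then contracts with $\bra{p_m}$, whereas you never compute $\rho_m^{-1}\ket{p_m}$ at all, instead extracting the scalar $\braket{p_m}{u}$ from the observation that $\tilde{\rho}_E\ket{u}$ must simultaneously lie in $\mathcal{R}(\tilde{\rho}_E)$ and be proportional to $\ket{p_m}$. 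Your range argument is arguably cleaner and makes the role of full NPT (strict positivity of the $q_k$ for $1\leq k\leq N-1$) more transparent; the paper's explicit $\ket{\chi_m}$ has the advantage that the same vector is reused essentially verbatim in part (a) of Lemma \ref{InverseML}, so the construction pays for itself later. Your closing remark that $\tilde{\rho}_E-\varepsilon P_m$ acquires a negative $\ket{D_0}\bra{D_0}$ entry is also the correct intuition for why $\frac{1}{2N}$ is maximal, and mirrors the argument the paper uses at the very end of Appendix \ref{App:C}.
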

\begin{proof}
By writting $\rho-\tilde{\rho}_S \equiv \sum q_k \ket{D_k}$ in the Dicke basis (\ref{2xNBasis}), the coefficients being $q_k = p_k - C_k^N p_0^{\frac{N-k}{N}}p_N^{\frac{k}{N}}$, for $k=0,\dots, N$. Then
\begin{equation}
\label{InverseM}
   \rho_m^{-1}	\ket{p_m}=\frac{N}{b_0^{(m)*}}\ket{D_0}+\frac{N}{b_N^{(m)*}}\ket{D_N} \equiv \ket{\chi_m},
\end{equation}
where the $b_k^{(m)}$ are the coefficients of $\ket{p_m}=\sum  b_k^{(m)} \ket{D_k}$.
To show this, it suffices to check that $\rho_m\ket{\chi_m} = \ket{p_m}$, by realizing that $P_m{\ket{\chi_m}}=2N\ket{p_m}$ and $(\rho-\tilde{\rho}_S)\ket{\chi_m} = 0$ (due to $q_0=q_N=0$). After  that, the lemma follows by applying $\bra{p_m} = \sum b_k^{(m)*} \bra{D_k}$ on the left side of (\ref{InverseM}).
\end{proof}

The following Lemma 2 refers to maximality with respect to a pair of projectors $P_{m,l}$. It will deal with the following state, for fixed values $m$ and $l$,
\begin{align}
\label{RhoML}
\rho_{ml} &\equiv\rho - \sum \limits_{j \ne m,l}\Lambda_j P_j\\
& = \rho - \rho_S + \frac{1}{2N}(P_m + P_l)\\
\label{RhoML2}
& = \sum q_k \ket{D_k}\bra{D_k}+\frac{1}{2N}(\ket{p_m}\bra{p_m} + \ket{p_l}\bra{p_l}),
\end{align}
where $\ket{p_m}=\sum b_k^{(m)} \ket{D_k}$, $\ket{p_l}=\sum b_k^{(l)} \ket{D_k}$ are product vectors defined as (\ref{prod_vec2}) and the relationship between their components is, again by virtue of (\ref{prod_vec2}),
\begin{equation}
\label{theta}
b_k^{(l)} = \omega^{(k-N)(m-l)} b_k^{(m)}\equiv \theta^{k-N} b_k^{(m)},
\end{equation}
where $\theta \equiv \omega 	^{m-l}=e^{\frac{i(m-l)\pi}{N}}$ has the property
\begin{equation}
\label{theta2}
\theta^N=\begin{cases}
+1 & \text{if $|m-l|$ is even}\\
-1 & \text{if $|m-l|$ is odd},
\end{cases}
\end{equation}
and trivially $|b_k^{(m)}|=|b_k^{(l)}|$.
Regarding this definition, we will now prove a lemma similar to the previous one.
\begin{lemma}
\label{InverseML}
For $\rho_{ml}$ defined in (\ref{RhoML}) we have that\\
\noindent (a)  if $|m-l|$ is odd then
\begin{equation}
\label{InverseMLa}
\bra{p_i}\rho_{ml}^{-1}\ket{p_j} =2N\delta_{ij},\,\,\,\,\,\,\,\,\,i,j\in \{m,l\}.
\end{equation}
(b) if $|m-l|$ is even then
\begin{equation}
\label{Lemma4b}
\bra{p_m}\rho_{ml}^{-1}\ket{p_m}+\bra{p_l}\rho_{ml}^{-1}\ket{p_m} =2N.
\end{equation}
\end{lemma}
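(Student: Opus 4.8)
The plan is to follow the blueprint of Lemma~\ref{LemmaM} but adapted to the rank-two subtraction. Using the form (\ref{RhoML2}), write $\rho_{ml} = Q + \frac{1}{2N}\left(\ket{p_m}\bra{p_m} + \ket{p_l}\bra{p_l}\right)$ with $Q \equiv \sum_{k=1}^{N-1} q_k\ket{D_k}\bra{D_k}$, and recall from the proof of Lemma~\ref{LemmaM} that the choice (\ref{XY}) forces the edge coefficients to vanish, $q_0 = q_N = 0$, while $q_k > 0$ for $1\le k\le N-1$ precisely because $\rho$ is full NPT, see (\ref{Entang_conds}). Accordingly I split $\mathcal{H} = \mathcal{B}\oplus\mathcal{M}$ with $\mathcal{B} = \mathrm{span}\{\ket{D_0},\ket{D_N}\}$ and $\mathcal{M} = \mathrm{span}\{\ket{D_1},\dots,\ket{D_{N-1}}\}$, so that $Q$ annihilates $\mathcal{B}$ and is invertible on $\mathcal{M}$, and I decompose each product vector as $\ket{p_m} = \ket{u_m} + \ket{w_m}$ with $\ket{u_m} = b_0^{(m)}\ket{D_0} + b_N^{(m)}\ket{D_N}\in\mathcal{B}$ and $\ket{w_m}\in\mathcal{M}$. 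Note $b_0^{(m)},b_N^{(m)}\neq 0$ since $X,Y\neq 0$, so $\ket{u_m}\neq 0$; and by (\ref{theta}), $b_0^{(l)} = \theta^{-N}b_0^{(m)} = \epsilon\, b_0^{(m)}$ and $b_N^{(l)} = b_N^{(m)}$, where $\epsilon\equiv\theta^N = \pm 1$ according to the parity of $|m-l|$ as in (\ref{theta2}); hence $\ket{u_l} = \epsilon\, b_0^{(m)}\ket{D_0} + b_N^{(m)}\ket{D_N}$.

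Next I would establish two facts. First, since $Q$, $P_m$ and $P_l$ are positive semidefinite, $\ker(\rho_{ml})=\ker Q\cap\ker(P_m+P_l)$ and therefore $\mathrm{range}(\rho_{ml}) = \mathrm{range}(Q)+\mathrm{span}\{\ket{p_m},\ket{p_l}\} = \mathcal{M}+\mathrm{span}\{\ket{p_m},\ket{p_l}\}$; in particular $\ket{p_m},\ket{p_l}\in\mathrm{range}(\rho_{ml})$. This is what makes the (pseudo)inverse meaningful on these vectors and, moreover, implies that for any $\ket{\chi}$ solving $\rho_{ml}\ket{\chi} = \ket{p_j}$ one has $\bra{p_i}\rho_{ml}^{-1}\ket{p_j} = \braket{p_i}{\chi}$ for $i\in\{m,l\}$, because $\ket{p_i}\perp\ker\rho_{ml} = \mathrm{range}(\rho_{ml})^\perp$, so the ambiguity of $\ket{\chi}$ up to $\ker\rho_{ml}$ is invisible when pairing against $\ket{p_i}$. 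Second, projecting the identity $\rho_{ml}\ket{\chi} = \ket{p_j}$ onto $\mathcal{B}$ and using $Q\ket{\chi}\in\mathcal{M}$ yields the single linear relation
\begin{equation*}
\tfrac{1}{2N}\left(\braket{p_m}{\chi}\,\ket{u_m} + \braket{p_l}{\chi}\,\ket{u_l}\right) = \ket{u_j}.
\end{equation*}

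With these in hand the two cases follow by reading off coefficients in the basis $\{\ket{D_0},\ket{D_N}\}$. If $|m-l|$ is odd then $\epsilon = -1$, so $\ket{u_m}$ and $\ket{u_l}$ are linearly independent and span $\mathcal{B}$; the relation above then pins down $\braket{p_m}{\chi}$ and $\braket{p_l}{\chi}$ uniquely, giving $\bra{p_i}\rho_{ml}^{-1}\ket{p_j} = 2N\delta_{ij}$, i.e. (\ref{InverseMLa}) (equivalently, one checks directly that the vector $\frac{N}{b_0^{(m)*}}\ket{D_0}+\frac{N}{b_N^{(m)*}}\ket{D_N}$ from Lemma~\ref{LemmaM} already solves $\rho_{ml}\ket{\chi}=\ket{p_m}$). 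If $|m-l|$ is even then $\epsilon = +1$, so $\ket{u_m}=\ket{u_l}\equiv\ket{u}\neq 0$, and choosing $j=m$ the relation collapses to $\left(\braket{p_m}{\chi}+\braket{p_l}{\chi}\right)\ket{u} = 2N\ket{u}$, that is, $\bra{p_m}\rho_{ml}^{-1}\ket{p_m}+\bra{p_l}\rho_{ml}^{-1}\ket{p_m} = 2N$, which is (\ref{Lemma4b}).

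The only genuinely delicate point — and the reason part (b) has a different form from part (a) and from Lemma~\ref{LemmaM} — is that for $|m-l|$ even the operator $\rho_{ml}$ is rank deficient (its kernel is the line in $\mathcal{B}$ orthogonal to $\ket{u}$, reflecting that $\ket{p_m}$ and $\ket{p_l}$ have identical edge components). Thus $\rho_{ml}^{-1}$ must be read as the Moore--Penrose pseudoinverse, and the first fact above is exactly what one needs in order to be sure that $\bra{p_i}\rho_{ml}^{-1}\ket{p_j}$ is still obtained by pairing $\ket{p_i}$ with an arbitrary preimage of $\ket{p_j}$; once that is secured, the projection argument goes through unchanged and the remainder is routine bookkeeping with (\ref{theta})--(\ref{theta2}).
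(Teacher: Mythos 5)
Your proposal is correct, and it reaches both identities by a route that is organizationally different from the paper's. The paper treats the two parities by two separate computations: for $|m-l|$ odd it simply exhibits the explicit preimage $\frac{N}{b_0^{(m)*}}\ket{D_0}+\frac{N}{b_N^{(m)*}}\ket{D_N}$ and verifies $\rho_{ml}\ket{\phi_m}=\ket{p_m}$ using $b_0^{(l)*}=-b_0^{(m)*}$, $b_N^{(l)*}=b_N^{(m)*}$; for $|m-l|$ even it expands $\rho_{ml}^{-1}\ket{p_m}$ with undetermined coefficients $a_k^{(m)}$, writes the full system (\ref{EqML}) of component equations, and extracts only the $k=N$ equation (where $R_N=0$ because $q_N=0$). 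Your block decomposition $\mathcal{B}\oplus\mathcal{M}$ unifies these: projecting $\rho_{ml}\ket{\chi}=\ket{p_j}$ onto $\mathrm{span}\{\ket{D_0},\ket{D_N}\}$ is exactly the $k=0$ and $k=N$ components of the paper's system, and the parity of $|m-l|$ then decides whether $\ket{u_m},\ket{u_l}$ are independent (giving all four entries of (\ref{InverseMLa}) at once) or equal (giving the single relation (\ref{Lemma4b})). What your version buys, beyond uniformity, is an explicit treatment of a point the paper passes over in silence: for $|m-l|$ even the operator $\rho_{ml}$ is genuinely rank-deficient (rank $N$, with kernel the line of $\mathcal{B}$ orthogonal to $\ket{u}$), so $\rho_{ml}^{-1}$ must be read as a pseudoinverse and the solution of (\ref{EqML}) is not unique; your observation that $\ket{p_m},\ket{p_l}\in\mathrm{range}(\rho_{ml})=(\ker\rho_{ml})^{\perp}$, so that $\braket{p_i}{\chi}$ is independent of the choice of preimage, is precisely what makes the paper's extraction of the $k=N$ component legitimate. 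The trade-off is that the paper's part (a) is a one-line verification once the preimage is guessed, whereas your argument derives it; both are equally rigorous.
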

\begin{proof}
(a) First of all we need to prove that $\rho_{ml}^{-1}\ket{p_m}=\frac{N}{b_0^{(m)*}}\ket{D_0}+\frac{N}{b_N^{(m)*}}\ket{D_N} \equiv \ket{\phi_m}$. To do that, it suffices to confirm that $\rho_{ml}\ket{\phi_m}=\ket{p_m}$:
\begin{align}
\rho_{ml} \ket{\phi_m}&= \frac{1}{2N} \left(\frac{N}{b_0^{(m)*}}b_0^{(m)*}+ \frac{N}{b_N^{(m)*}}b_N^{(m)*} \right) \ket{p_m}\\
&+\frac{1}{2N} \underbrace{\left(\frac{N}{b_0^{(m)*}}b_0^{(l)*}+ \frac{N}{b_N^{(m)*}}b_N^{(l)*} \right)}_{=0} \ket{p_l}=\ket{p_m},\notag
\end{align}
where the last expression in brackets vanishes because $b_0^{(l)*}=\theta^{N}b_0^{(m)*}=-b_0^{(m)*}$ (with $\theta^N=-1$ since $|m-l|$ is odd), while $b_N^{(l)*}=b_N^{(m)*}$. After that, (\ref{InverseMLa}) follows easily.

\noindent (b) In this case we first express $\rho_{ml}^{-1} \ket{p_m}$ in the $\{\ket{D_k} \}$ basis, without taking care of the value of its coefficients $a_k^{(m)}$:
\begin{align}
 \rho_{ml}^{-1} \ket{p_m} &= \sum a_k^{(m)} \ket{D_k} \equiv \ket{\psi_m}.
\end{align}
Now, using (\ref{RhoML2}) and (\ref{theta}) we obtain that, in order to fulfill the expression $\rho_{ml} \ket{\psi_m} = \ket{p_m}$, the following relationship must hold
\begin{equation}
\label{EqML}
\sum \limits_{i=0}^N a_i^{(m)} b_i^{(m)*}(1+\theta^{k-i}+\delta_{ki}NR_k)=2N,
\end{equation}
for $k=0,\dots,N,$ with the definition $R_k \equiv \frac{2q_k}{|b_k^{(m)}|^2}$, whenever $|b_k^{(m)}| \neq 0$.

Taking the $k=N$ component of (\ref{EqML})
\begin{equation}
\sum \limits_{i=0}^N a_i^{(m)} b_i^{(m)*}(1+\theta^{N-i})=2N,
\end{equation}
we obtain the desired result since
\begin{equation}
\bra{p_m}\rho_{ml}^{-1}\ket{p_m} =\sum \limits_{i=0}^N a_i^{(m)} b_i^{(m)*} 
\end{equation}
and
\begin{equation}
\bra{p_l}\rho_{ml}^{-1}\ket{p_m} =\sum \limits_{i=0}^N  a_i^{(m)} b_i^{(m)*}\theta^{N-i}.
\end{equation}
Note that due to the invariance $\rho_{ml}=\rho_{lm}$ we have the same expression (\ref{Lemma4b}) with $m$ and $l$ exchanged. Also remark that $\bra{p_m}\rho_{ml}^{-1}\ket{p_l}=\bra{p_l}\rho_{ml}^{-1}\ket{p_m}$ due to hermiticity of $\rho_{ml}$.
\end{proof}

Now we have all the tools at hand to state the main theorem of this paper, whose proof, which is tedious but straightforward, has been moved to Appendix \ref{App:C}.

\begin{theorem}
\label{BSASymTheorem}
Given a full NPT $N$-qubit symmetric state of the form $\rho = \sum\nolimits_{k = 0}^{N}{p_k\ket{D_k^N}\bra{D_k^N}}$ where $\ket{D_k^N}$ are the normalized Dicke states of dimension $N$, then $\rho$ can be decomposed as ${\rho}= \lambda {\rho}_{BSA} + (1-\lambda) {\rho}_E$, where $\tilde{\rho}_{BSA}$ (not normalized) is given by
\begin{equation}
\label{BSA}
\tilde{\rho}_{BSA}=\tilde{\rho}_S\left(\sqrt[2N]{p_0},\sqrt[2N]{p_N}\right)=\frac{1}{2N}\sum \limits_{k=0}^{2N-1}P_k
\end{equation}
is the BSA of $\rho$ and $\lambda = Tr (\tilde{\rho}_{BSA} )= (\sqrt[N]{p_0}+\sqrt[N]{p_N})^N$ is the so-called {\it separability} of $\rho$. 
\end{theorem}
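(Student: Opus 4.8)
The plan is to verify, in turn, the list of conditions that the characterization recalled in Appendix~\ref{App:AppendixA} (after \cite{Lewenstein98}) imposes on a separable state for it to be the BSA of $\rho$: (a) $\tilde\rho_E:=\rho-\tilde\rho_S\ge 0$; (b) $\mathcal{R}(\tilde\rho_E)$ contains no product vector; (c) for each $m$ the weight $\Lambda_m=\tfrac1{2N}$ is maximal with respect to subtracting the single projector $P_m$ from $\rho_m$; (d) for each pair $(m,l)$ the weights $(\Lambda_m,\Lambda_l)=(\tfrac1{2N},\tfrac1{2N})$ are maximal with respect to subtracting $P_m,P_l$ from $\rho_{ml}$. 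Once (a)--(d) are in place, the uniqueness of the BSA \cite{Karnas01} identifies $\tilde\rho_S(\sqrt[2N]{p_0},\sqrt[2N]{p_N})$ with $\tilde\rho_{BSA}$, and the announced value of $\lambda$ is then just $\lambda=\mathrm{Tr}\,\tilde\rho_{BSA}=\sum_{k=0}^{N}C_k^N\,p_0^{(N-k)/N}p_N^{k/N}=(\sqrt[N]{p_0}+\sqrt[N]{p_N})^N$ by the binomial theorem.

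Conditions (a) and (c) are essentially already done. For (a): inserting the choice (\ref{XY}) in (\ref{RhoS}) turns the coefficients of $\tilde\rho_S$ into $C_k^N p_0^{(N-k)/N}p_N^{k/N}$, so $\tilde\rho_E=\sum_k q_k\ket{D_k}\bra{D_k}$ with $q_k=p_k-C_k^N p_0^{(N-k)/N}p_N^{k/N}$; the full-NPT inequalities (\ref{Entang_conds}) give $q_k>0$ for $1\le k\le N-1$, while $q_0=q_N=0$ identically, so $\tilde\rho_E\ge0$. This also settles (b): $\mathcal{R}(\tilde\rho_E)=\mathrm{span}\{\ket{D_1},\dots,\ket{D_{N-1}}\}$, whereas a generic product vector of $\mathcal{S}_N$ in the form (\ref{prod_vec}) has $\ket{D_0}$- and $\ket{D_N}$-components $e^{iN\phi}X^{N}$ and $Y^{N}$, which vanish simultaneously only for the null vector, so no nonzero product vector lies in $\mathcal{R}(\tilde\rho_E)$. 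For (c): by Lemma~\ref{LemmaM}, $\bra{p_m}\rho_m^{-1}\ket{p_m}=2N$, and the single-projector maximality formula (\ref{LamdaEqual}) gives $\Lambda_m^{\max}=1/\bra{p_m}\rho_m^{-1}\ket{p_m}=\tfrac1{2N}=\Lambda_m$.

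The heart of the argument is (d), which splits on the parity of $|m-l|$ and feeds on Lemma~\ref{InverseML}. Using $\rho_{ml}=\rho_{lm}$ one has $\bra{p_m}\rho_{ml}^{-1}\ket{p_m}=\bra{p_l}\rho_{ml}^{-1}\ket{p_l}=:a$, and hermiticity gives $\bra{p_m}\rho_{ml}^{-1}\ket{p_l}=\bra{p_l}\rho_{ml}^{-1}\ket{p_m}=:c\in\mathbb{R}$; the feasibility region $\{(\Lambda_m,\Lambda_l)\ge0:\rho_{ml}-\Lambda_mP_m-\Lambda_lP_l\ge0\}$ is then bounded by the curve $(1-a\Lambda_m)(1-a\Lambda_l)=c^2\Lambda_m\Lambda_l$. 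If $|m-l|$ is odd, Lemma~\ref{InverseML}(a) gives $a=2N$, $c=0$, the projectors decouple, the feasibility region is the square $[0,\tfrac1{2N}]^2$, and $(\tfrac1{2N},\tfrac1{2N})$ is its maximal (corner) point. If $|m-l|$ is even, Lemma~\ref{InverseML}(b) gives $a+c=2N$; then $\big(1-\tfrac a{2N}\big)^2=\big(\tfrac{2N-a}{2N}\big)^2=\big(\tfrac c{2N}\big)^2$, so $(\tfrac1{2N},\tfrac1{2N})$ lies on the boundary curve, and one verifies it maximizes $\Lambda_m+\Lambda_l$ along it --- which is exactly the pair-maximality condition of Appendix~\ref{App:AppendixA}.

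The main obstacle I anticipate is the even-parity case of (d): one must translate the pair-subtraction condition of \cite{Lewenstein98} into the explicit boundary curve above, cope with the degeneracy $a=b$, and --- crucially --- show that $(\tfrac1{2N},\tfrac1{2N})$ is not merely stationary along the diagonal $\Lambda_m=\Lambda_l$ but the genuine maximizer of $\Lambda_m+\Lambda_l$ over the whole feasibility region; this is the part that is ``tedious but straightforward''. A secondary point worth spelling out is why checking (c)--(d) only for the $2N$ vectors $\ket{p_m}$ suffices: every product vector of $\mathcal{R}(\rho)$ outside this list fails to lie in $\mathcal{R}(\tilde\rho_E)$ by (b), hence by the Appendix~\ref{App:AppendixA} characterization it carries zero optimal weight and imposes no further constraint. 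Finally, I would note that the construction is confined to the full-NPT region: outside it some $q_k<0$, $\tilde\rho_E\not\ge0$, and (a) already fails.
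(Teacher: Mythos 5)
Your proposal is correct and follows essentially the same route as the paper: verify the Lewenstein--Sanpera OSA conditions using Lemma~\ref{LemmaM} for single projectors and Lemma~\ref{InverseML} (split by the parity of $|m-l|$) for pairs, then rule out product vectors in $\mathcal{R}(\tilde\rho_E)$ via their nonvanishing $\ket{D_0}$, $\ket{D_N}$ components. The one step you flag as an obstacle --- showing $(\tfrac1{2N},\tfrac1{2N})$ is the genuine maximizer in the even-parity case --- dissolves immediately if, instead of re-deriving the feasibility boundary, you plug $a+c=2N$ into the closed-form maximal pair of Lemma~\ref{maximal2}(d): $\Lambda_l=(a-c)/(a^2-c^2)=1/(a+c)=\tfrac1{2N}$, which is exactly the computation the paper carries out.
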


According to this theorem, provided that $$p_k \geq C^N_k p_0^{\frac{N-k}{N}}p_N^{\frac{k}{N}},$$ for $k=1,\dots,N-1$, the optimal and unique decomposition of $\rho$ reads
\begin{equation}
\label{BSAdec}
\rho= \lambda {\rho}_{BSA} + (1-\lambda)\rho_E,
\end{equation}
where $r(\rho)=r(\rho_{BSA})=N+1$, $\rho_E$ is a full NPT state of rank $N-1$ and the weight $\lambda$ has the maximum value that a separable state can achieve in decomposition (\ref{BSAdec}).

%%%%%%%%%%%%%%%%%%%%%%%%%%%%%%%%%%%%%%%%%%%%%%%%%%%%%%%%%%%%%%%%%%%%%%%%%%%%%%%%%%%%%%%%%%%%%%%%%%
%				2x2x2 SYMMETRIC STATES
%%%%%%%%%%%%%%%%%%%%%%%%%%%%%%%%%%%%%%%%%%%%%%%%%%%%%%%%%%%%%%%%%%%%%%%%%%%%%%
Let us again illustrate our findings with an explicit example for the simplest $N=3$ case (\ref{2x3Matrix}).
Using the PPT criterion we obtain that this particular state $\rho$ is entangled iff  either $p_2^2>3p_3p_1$ and/or  $p_1^2>3p_0p_2$ (\ref{NPTConds}). The full NPT conditions thus read
\begin{align}
p_{1} &> 3p_0^{2/3}p_3^{1/3}\\
p_{2} &> 3p_0^{1/3}p_3^{2/3},
\end{align}
in accordance with (\ref{Entang_conds}). Hence, a generic separable state $\rho_S$ of the form (\ref{2x3Matrix}) is given by $p_0$, $p_1'$, $p_2'$, $p_3$ with $p_1' \leq 3p_0^{2/3}p_3^{1/3}$ and $p_2' \leq 3p_0^{1/3}p_3^{2/3}$. Hence, the problem of finding the BSA of a given entangled $\rho$ (full NPT in this case) corresponds to the following maximization problem
\begin{equation}
\max \limits_{\rho_S \in \mathcal{S}}  \left\lbrace \mbox{Tr}(\tilde{\rho}_S) | \rho-\tilde{\rho}_S \geq 0 \right\rbrace,
\end{equation}
where $\mbox{Tr}(\tilde{\rho}_S) = p_0 + p_1'+p_2'+p_3$.
This problem, provided that $\rho$ is full NPT, has the simple solution
\begin{align}
p_1' &= 3p_0^{2/3}p_3^{1/3}\\
p_2' &= 3p_0^{1/3}p_3^{2/3},
\end{align}
and therefore the BSA of $\rho$ is given by 
\begin{equation*}
\label{2x3BSA}
\tilde{\rho}_{BSA}=\!\!\left[ {\begin{array}{*{20}{c}}
p_0 & 0 & 0 & 0 & 0 & 0\\
0    &  2p_0^{\frac{2}{3}}p_3^{\frac{1}{3}} & 0 & \sqrt{2}p_0^{\frac{2}{3}}p_3^{\frac{1}{3}} & 0 & 0\\
0 & 0 & p_0^{\frac{1}{3}}p_3^{\frac{2}{3}} & 0 & \sqrt{2}p_0^{\frac{1}{3}}p_3^{\frac{2}{3}}  & 0\\
0 & \sqrt{2}p_0^{\frac{2}{3}}p_3^{\frac{1}{3}} & 0 & p_0^{\frac{2}{3}}p_3^{\frac{1}{3}} & 0 & 0\\
0 & 0 & \sqrt{2}p_0^{\frac{1}{3}}p_3^{\frac{2}{3}}  & 0 & 2p_0^{\frac{1}{3}}p_3^{\frac{2}{3}}  & 0\\
0 & 0 & 0 & 0 & 0 & p_3
\end{array}} \right]
\end{equation*}
with 
\begin{eqnarray}
\lambda \equiv \mbox{Tr}\tilde{\rho}_{_{BSA}} &=& p_0 + 3p_0^{2/3}p_3^{1/3}+3p_0^{1/3}p_3^{2/3} + p_3 \nonumber\\
&=&(\sqrt[3]{p_0}+\sqrt[3]{p_3})^3
\end{eqnarray} 
in agreement with Theorem \ref{BSASymTheorem}, and consequently the state (\ref{2x3Matrix}) has the optimal decomposition $\rho = \lambda \rho_{BSA} + (1-\lambda) \rho_E$, with $\tilde{\rho}_E = \rho - \tilde{\rho}_{BSA}$, where, as before, the tilde indicates the state is not necessarily normalized.

In Figure \ref{fig:Region} we display the feasible region for the BSA of a full NPT $N$-qubit diagonal symmetric state as a function of $p_0$ and $p_N$ for different number of qubits $N$. For each $N$, the region covers the  shaded area below the curve given by the boundary of $\lambda = (\sqrt[N]{p_0}+\sqrt[N]{p_N})^N \leq 1$. The dashed region corresponds to unphysical states and the area between dashed and shaded regions corresponds to non full NPT entangled states, including possible bound states. As it can be seen, increasing the number of parties $N$, the region of full NPT shrinks dramatically. It is also important to notice that whenever $\rho$ is full NPT the parameter $\lambda$ is upper bounded by $1$:
\begin{eqnarray}
 \lambda &=& (\sqrt[N]{p_0}+\sqrt[N]{p_N})^N \\
 &=& \sum \limits_{k=0}^N {C_k^N p_0^{\frac{N-k}{N}} p_N^{\frac{k}{N}}}\\
 &<& \sum \limits_{k=0}^N p_k = 1, 
\end{eqnarray}
where the last inequality stands by virtue of (\ref{Entang_conds}).

\begin{figure}
  \centering
    \includegraphics[width=0.45\textwidth]{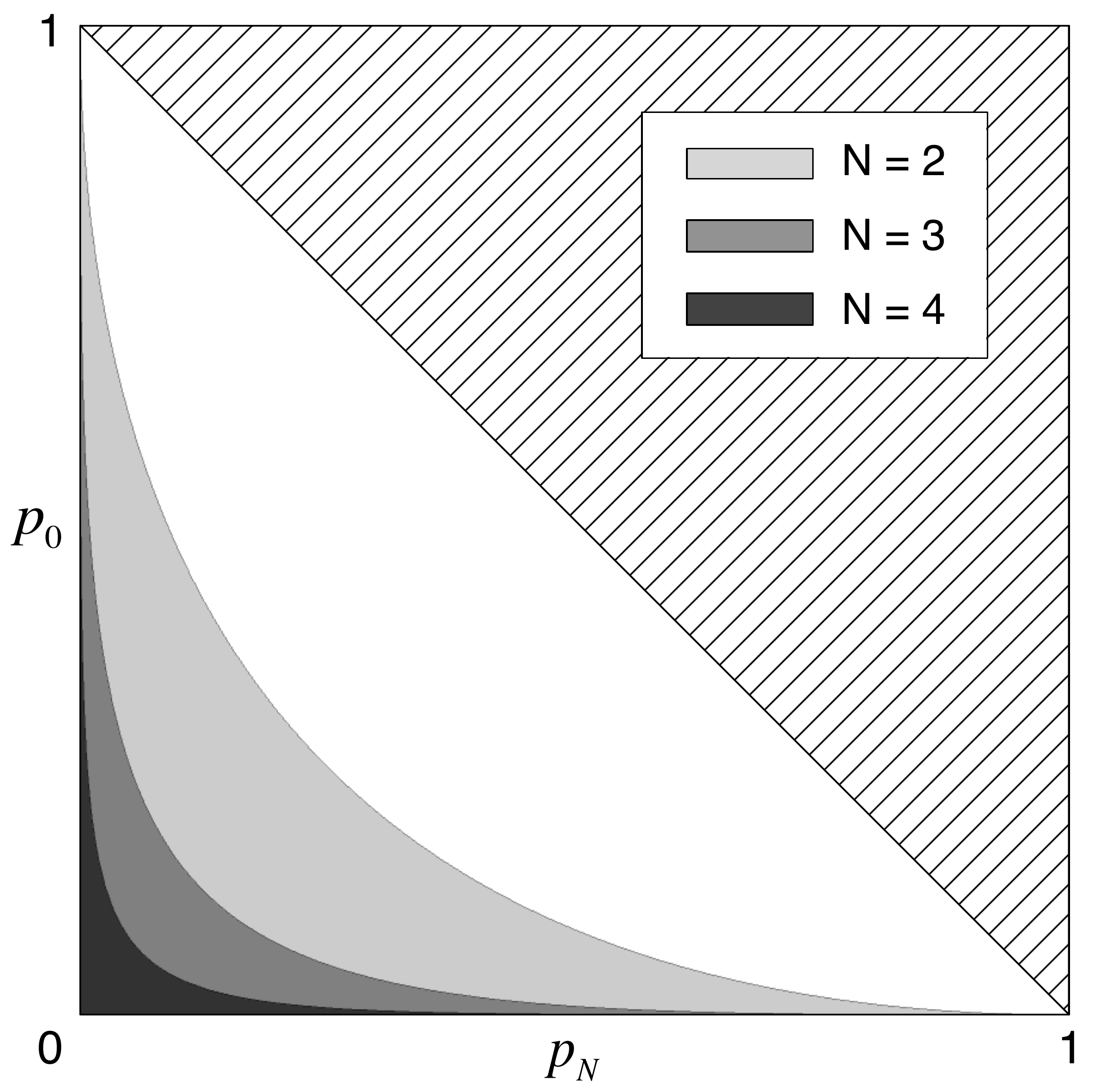}
  \caption{Feasible region for the BSA of full NPT $N$-qubit symmetric states as a function of $p_0$ and $p_N$, for $N=2, 3$ and $4$. For each $N$, the region covers the  shaded area below the curve given by the boundary of $\lambda = (\sqrt[N]{p_0}+\sqrt[N]{p_N})^N \leq 1$. Dashed region corresponds to unphysical states and the area between dashed and shaded regions corresponds to non full NPT entangled states, for each $N$.}
  \label{fig:Region}
\end{figure}
%%%%%%%%%%%%%%%%%%%%%%%%%%%%%%%%%%%%%%%%%%%%%%%%%%%%%%%%%%%%%%%%%%%%%%%%%%%%%%%%%%%%%%%%%%%%%%%%%%
%				CONCLUSIONS
%%%%%%%%%%%%%%%%%%%%%%%%%%%%%%%%%%%%%%%%%%%%%%%%%%%%%%%%%%%%%%%%%%%%%%%%%%%%%%%%%%%%%%%%%%%%%%%%%%
\section{Conclusions}
We have provided an analytical closed form for the best separable approximation of  N-qubit Diagonal Symmetric  States (DSS) and in turn this lead to an analytical measure of entanglement (separability) for those multipartite states, given by the coefficient $\lambda \in [0,1]$, which for $\lambda = 1$ it is separable and the value $\lambda = 0$ is out of our scope, since in that case $\rho$ is no more full NPT.

The simplicity of the construction relies on the fact that (i) states considered are diagonal in the Dicke basis and (ii)  there are $N-1$ entangled states in the range of $\rho$ (full NPT).     
It is interesting to notice that altough $r(\rho_{BSA} )= N+1$, the BSA depends only on the  probabilities $p_0$ and $p_N$, corresponding to the two unique product  states on the Dicke basis ($\ket{0 \cdots 0}$ and $\ket{1 \cdots 1}$). Numerically we have seen that this feature is far from being general and, in fact, it does not hold outside the range of full NPT of $\rho$. That is, there exists entangled states whose BSA depend on the in-between probabilities $p_1,\dots, p_{N-1}$. It would remain to consider the situation where $\rho$ is non-full NPT (i.e. with $1 \leq  t <  N-1$, where $t$ stands for the number of negative eigenvalues of $\rho^{T_B}$) in order to complete the separability problem of DSS.

%%%%%%%%%%%%%%%%%%%%%%%%%%%%%%%%%%%%%%%%%%%%%%%%%%%%%%%%%%%%%%%%%%%%%%%%%%%%%%%%%%%%%%%%%%%%%%%%%%
%				ACKNOWLEDGMENTS
%%%%%%%%%%%%%%%%%%%%%%%%%%%%%%%%%%%%%%%%%%%%%%%%%%%%%%%%%%%%%%%%%%%%%%%%%%%%%%%%%%%%%%%%%%%%%%%%%%
\section*{ACKNOWLEDGMENTS}
We thank N. Datta for discussions. We acknowledge financial support from the
Spanish MINECO (FIS2008-01236), European Regional development Fund, Generalitat de
Catalunya Grant No. SGR2009-00347. RQ acknowledges Spanish MECD for the FPU Fellowship FPU12/03323.

%%%%%%%%%%%%%%%%%%%%%%%%%%%%%%%%%%%%%%%%%%%%%%%%%%%%%%%%%%%%%%%%%%%%%%%%%%%%%%%%%%%%%%%%%%%%%%%%%%
%				APPENDIX A
%%%%%%%%%%%%%%%%%%%%%%%%%%%%%%%%%%%%%%%%%%%%%%%%%%%%%%%%%%%%%%%%%%%%%%%%%%%%%%%%%%%%%%%%%%%%%%%%%%
\appendix
\section{} \label{App:AppendixA}
\label{App:A}
Here we present the conditions that $\rho_S$ need to fulfill in order to be the BSA of a given density matrix $\rho$, as shown in \cite{Lewenstein98}.

{\bf \noindent Definition 1.} A non-negative parameter $\Lambda \in \mathbb{R}$ is called {\it maximal} with respect to a density matrix $\rho$, and the projection operator $P=\ket{\psi}\bra{\psi}$ iff $\rho-\Lambda P \geq 0$, and for every $\epsilon \geq 0$, the matrix $\rho -(\Lambda + \epsilon)P$ is not positive definite.

\begin{lemma}[\cite{Lewenstein98}]
\label{maximal1}
$\Lambda$ is maximal with respect to $\rho$ and $P = \ket{\psi} \bra{\psi}$ iff
\begin{enumerate}[(a)]
\item  if $\ket{\psi} \notin \mathcal{R}(\rho)$ then $\Lambda = 0$,
\item if $\ket{\psi} \in \mathcal{R}(\rho)$ then \begin{equation}
\label{lemma1}
0 < \Lambda=\frac{1}{\bra{\psi}\rho^{-1}\ket{\psi}}.
\end{equation}
\end{enumerate}
\end{lemma}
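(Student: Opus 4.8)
The plan is to prove both alternatives directly from Definition 1, splitting according to whether $\ket{\psi}$ lies in $\mathcal{R}(\rho)$. Throughout I use that, $\rho$ being Hermitian and positive semidefinite, the space decomposes orthogonally as $\mathcal{R}(\rho)\oplus\ker(\rho)$, and that $\rho^{-1}$ is the pseudoinverse, i.e. the genuine inverse of $\rho$ on its support. For part (a), suppose $\ket{\psi}\notin\mathcal{R}(\rho)$ and write $\ket{\psi}=\ket{\psi_\parallel}+\ket{\psi_\perp}$ with $\ket{\psi_\parallel}\in\mathcal{R}(\rho)$ and $0\ne\ket{\psi_\perp}\in\ker(\rho)$. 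I would test positivity of $\rho-\Lambda P$ on the single vector $\ket{\psi_\perp}$: since $\rho\ket{\psi_\perp}=0$ one has $\bra{\psi_\perp}\rho\ket{\psi_\perp}=0$, while $\braket{\psi}{\psi_\perp}=\|\psi_\perp\|^2\ne0$ gives $\bra{\psi_\perp}P\ket{\psi_\perp}=\|\psi_\perp\|^4>0$. Hence $\bra{\psi_\perp}(\rho-\Lambda P)\ket{\psi_\perp}=-\Lambda\|\psi_\perp\|^4$, which is non-negative only if $\Lambda\le0$; together with $\Lambda\ge0$ this forces $\Lambda=0$.

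For part (b), assume $\ket{\psi}\in\mathcal{R}(\rho)$. The key observation is that $P=\ket{\psi}\bra{\psi}$ annihilates $\ker(\rho)$ (because $\braket{\psi}{w}=0$ for $\ket{w}\in\ker(\rho)$) and maps $\mathcal{R}(\rho)$ into itself, so $\rho-\Lambda P$ is block diagonal with respect to $\mathcal{R}(\rho)\oplus\ker(\rho)$, vanishing identically on the kernel block. Positivity therefore reduces to the support, where $\rho$ is genuinely invertible. I would then use the congruence $\rho-\Lambda P=\rho^{1/2}\bigl(I-\Lambda\ket{\phi}\bra{\phi}\bigr)\rho^{1/2}$ on $\mathcal{R}(\rho)$, with $\ket{\phi}\equiv\rho^{-1/2}\ket{\psi}$, so that $\rho-\Lambda P\ge0$ holds iff the rank-one perturbation $I-\Lambda\ket{\phi}\bra{\phi}$ is positive on the support. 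This operator has smallest eigenvalue $1-\Lambda\|\phi\|^2$ along $\ket{\phi}$, where $\|\phi\|^2=\bra{\psi}\rho^{-1}\ket{\psi}>0$; hence $\rho-\Lambda P\ge0$ iff $\Lambda\le 1/\bra{\psi}\rho^{-1}\ket{\psi}$.

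It remains to check that this threshold value is precisely the maximal one. At $\Lambda=1/\bra{\psi}\rho^{-1}\ket{\psi}$ the perturbed operator acquires a zero eigenvalue along $\ket{\phi}$, so for any $\epsilon>0$ the same direction yields $1-(\Lambda+\epsilon)\|\phi\|^2=-\epsilon\|\phi\|^2<0$ for $I-(\Lambda+\epsilon)\ket{\phi}\bra{\phi}$, and pulling back through the congruence shows $\rho-(\Lambda+\epsilon)P$ is no longer positive, as required by Definition 1. Positivity of $\bra{\psi}\rho^{-1}\ket{\psi}$ (finite because $\ket{\psi}\in\mathcal{R}(\rho)$, strictly positive because $\ket{\psi}\ne0$) also yields $\Lambda>0$. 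I expect the only delicate point to be the bookkeeping of the pseudoinverse: one must verify that restricting to $\mathcal{R}(\rho)$ discards nothing, which is exactly guaranteed by the block-diagonal structure noted above, and that $\rho^{1/2}$ and $\rho^{-1/2}$ are understood as the corresponding operators on the support, so that $\ket{\phi}=\rho^{-1/2}\ket{\psi}$ is well defined and $\bra{\phi}\phi\rangle=\bra{\psi}\rho^{-1}\ket{\psi}$.
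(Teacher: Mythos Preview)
Your argument is correct. Part (a) is handled cleanly by testing on the kernel component $\ket{\psi_\perp}$, and part (b) via the congruence $\rho-\Lambda P=\rho^{1/2}(I-\Lambda\ket{\phi}\bra{\phi})\rho^{1/2}$ on the support, which reduces the question to the trivial spectral analysis of a rank-one perturbation of the identity. The care you take with the pseudoinverse and the block decomposition $\mathcal{R}(\rho)\oplus\ker(\rho)$ is appropriate and closes the argument.

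Regarding comparison with the paper: the paper does \emph{not} actually prove this lemma. It is stated in Appendix~A as a quotation from \cite{Lewenstein98} and used as a black box in the proof of Theorem~\ref{BSASymTheorem}; no argument is given here. So there is no paper proof to compare against. Your proof is the standard one (and essentially the one in the original reference): the congruence trick with $\rho^{\pm 1/2}$ is precisely how Lewenstein and Sanpera argue, reducing the positivity condition to $1-\Lambda\bra{\psi}\rho^{-1}\ket{\psi}\ge 0$.

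One minor remark: Definition~1 as written in the paper says ``for every $\epsilon\ge 0$ the matrix $\rho-(\Lambda+\epsilon)P$ is not positive \emph{definite}'', which at $\epsilon=0$ means $\rho-\Lambda P$ should have a nontrivial kernel. Your argument establishes this too (the zero eigenvalue along $\rho^{1/2}\ket{\phi}$ at the threshold), so you are consistent with the definition whether one reads it literally or interprets it in the more common form ``for every $\epsilon>0$, not positive semidefinite''.
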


{\bf \noindent Definition 2.} A pair of non-negative parameters $(\Lambda_1,\Lambda_2)$ is called {\it maximal} with respect to $\rho$ and a pair of projection operators $P_1 = \ket{\psi_1}\bra{\psi_1}$, $P_2 = \ket{\psi_2}\bra{\psi_2}$ iff $\rho - \Lambda_1 P_1 - \Lambda_2 P_2 \geq 0$, $\Lambda_1$ is maximal with respect to $\rho - \Lambda_2 P_2$, $\Lambda_2$ is maximal with respect to $\rho - \Lambda_1 P_1$, and the sum $\Lambda_1 + \Lambda_2$ is maximal. 

\begin{lemma}[\cite{Lewenstein98}]
\label{maximal2}
A pair $(\Lambda_1,\Lambda_2)$ is maximal with respect to $\rho$ and a pair of  projectors $(P_1,P_2)$ iff
\begin{enumerate}[(a)]
\item if $\ket{\psi_1},\ket{\psi_2} \notin \mathcal{R}(\rho)$ then $\Lambda_1=\Lambda_2=0$.
\item if $\ket{\psi_1}\notin \mathcal{R}(\rho)$, while $\ket{\psi_2}\in \mathcal{R}(\rho)$ then $\Lambda_1=0$, $\Lambda_2 = \bra{\psi_2}\rho^{-1}\ket{\psi_2}^{-1}$.
\item if $\ket{\psi_1},\ket{\psi_2} \in \mathcal{R}(\rho)$ and $\bra{\psi_1}\rho^{-1}\ket{\psi_2}=0$ then $\Lambda_i=\bra{\psi_i}\rho^{-1}\ket{\psi_i}^{-1}, i=1,2.$
\item if $\ket{\psi_1},\ket{\psi_2} \in \mathcal{R}(\rho)$ and $\bra{\psi_1}\rho^{-1}\ket{\psi_2} \neq 0$ then
\begin{align}
\label{lemma2}
\Lambda_1&=(\bra{\psi_2}\rho^{-1}\ket{\psi_2}-|\bra{\psi_1}\rho^{-1}\ket{\psi_2}|)/D\\
\Lambda_2&=(\bra{\psi_1}\rho^{-1}\ket{\psi_1}-|\bra{\psi_1}\rho^{-1}\ket{\psi_2}|)/D,
\end{align}
\end{enumerate}
where $D=\bra{\psi_1}\rho^{-1}\ket{\psi_1}\bra{\psi_2}\rho^{-1}\ket{\psi_2}-|\bra{\psi_1}\rho^{-1}\ket{\psi_2}|^{2}$.
\end{lemma}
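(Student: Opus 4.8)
The plan is to bootstrap the two–projector statement from the single–projector characterization of Lemma~\ref{maximal1}, using the Sherman--Morrison identity to make the two coupled maximality requirements explicit, and then to finish with an elementary two–variable optimization. Throughout I abbreviate $a\equiv\bra{\psi_1}\rho^{-1}\ket{\psi_1}$, $b\equiv\bra{\psi_2}\rho^{-1}\ket{\psi_2}$ and $c\equiv\bra{\psi_1}\rho^{-1}\ket{\psi_2}$, so that $D=ab-|c|^2$ and the claimed weights read $\Lambda_1=(b-|c|)/D$, $\Lambda_2=(a-|c|)/D$.

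I would first dispose of the degenerate cases (a) and (b). The key observation is that subtracting a positive operator can only shrink the range: if $\rho-\Lambda_2 P_2\geq 0$ then $\mathcal{R}(\rho-\Lambda_2 P_2)\subseteq\mathcal{R}(\rho)$, because any $\ket{v}\in\ker\rho$ satisfies $0\leq\bra{v}(\rho-\Lambda_2P_2)\ket{v}=-\Lambda_2|\braket{v}{\psi_2}|^2\leq 0$ and hence lies in $\ker(\rho-\Lambda_2P_2)$. Consequently, whenever $\ket{\psi_1}\notin\mathcal{R}(\rho)$ we also have $\ket{\psi_1}\notin\mathcal{R}(\rho-\Lambda_2P_2)$, and the maximality of $\Lambda_1$ with respect to $\rho-\Lambda_2P_2$ forces $\Lambda_1=0$ by Lemma~\ref{maximal1}(a). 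This settles (a) at once (both weights vanish) and, in (b), reduces the problem to a single projector $P_2$ subtracted from $\rho$, so that $\Lambda_2=\bra{\psi_2}\rho^{-1}\ket{\psi_2}^{-1}$ by Lemma~\ref{maximal1}(b).

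For (c) and (d) both vectors lie in $\mathcal{R}(\rho)$ and stay there under the subtractions, so Lemma~\ref{maximal1}(b) applies to each individual maximality condition. Expanding $(\rho-\Lambda_2P_2)^{-1}$ by Sherman--Morrison gives $\bra{\psi_1}(\rho-\Lambda_2P_2)^{-1}\ket{\psi_1}=a+\Lambda_2|c|^2/(1-\Lambda_2 b)$, so that ``$\Lambda_1$ maximal with respect to $\rho-\Lambda_2P_2$'' becomes $1/\Lambda_1=a+\Lambda_2|c|^2/(1-\Lambda_2b)$, which after clearing denominators collapses to the single boundary relation
\begin{equation}
\label{boundary}
1-\Lambda_1 a-\Lambda_2 b+\Lambda_1\Lambda_2 D=0 .
\end{equation}
The symmetric condition on $\Lambda_2$ yields exactly the same equation, confirming that the two individual–maximality requirements together are equivalent to the single constraint (\ref{boundary}); this is precisely the curve where $\rho-\Lambda_1P_1-\Lambda_2P_2$ acquires a zero eigenvalue. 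When $c=0$ (case (c)) equation (\ref{boundary}) factorizes as $(1-\Lambda_1a)(1-\Lambda_2b)=0$, the two constraints decouple, and maximizing each weight separately gives $\Lambda_i=\bra{\psi_i}\rho^{-1}\ket{\psi_i}^{-1}$.

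The substantive case is (d), $c\neq0$. Here the feasible set $\{(\Lambda_1,\Lambda_2)\geq0:\rho-\Lambda_1P_1-\Lambda_2P_2\geq0\}$ is compact and convex, being an affine preimage of the positive cone bounded by $\Lambda_1 a\leq1$, $\Lambda_2 b\leq1$, so the linear functional $\Lambda_1+\Lambda_2$ attains its maximum on the smooth boundary (\ref{boundary}). Imposing stationarity with a Lagrange multiplier gives $(\Lambda_2-\Lambda_1)D=a-b$, and solving this together with (\ref{boundary}) produces the stated formulas; a direct substitution then verifies that $\Lambda_1=(b-|c|)/D$, $\Lambda_2=(a-|c|)/D$ satisfy both relations. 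The main obstacle I expect is the bookkeeping at the end rather than any conceptual difficulty: one must check feasibility, i.e.\ that the Lagrange point lands in the physical quadrant $\Lambda_1,\Lambda_2\geq0$ (equivalently $a,b\geq|c|$, where the absolute value $|c|$ and the sign of $D$ must be handled with care), and that it is the global maximum rather than a corner of the region; when feasibility fails the optimum migrates to an edge and the problem degenerates back to the single–projector case (b). Convexity of the feasible set guarantees that the unique interior stationary point of the boundary arc, when admissible, is the sought maximizer, which closes the argument.
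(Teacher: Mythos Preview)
The paper does not prove this lemma at all: it is quoted verbatim from \cite{Lewenstein98} in Appendix~\ref{App:A} purely as a tool, and is then invoked (not reproved) in Appendix~\ref{App:C} to establish Theorem~\ref{BSASymTheorem}. There is therefore no in-paper argument to compare your proposal against.

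On its own merits your outline is sound and is essentially the standard derivation. The reduction of (a)--(b) via the range-monotonicity observation is correct, and the Sherman--Morrison step cleanly yields the symmetric boundary relation $1-\Lambda_1 a-\Lambda_2 b+\Lambda_1\Lambda_2 D=0$, after which (c) and (d) are a two-variable optimization exactly as you say. Two small points worth tightening: first, the phrase ``both vectors \dots\ stay there under the subtractions'' is not literally true at the extremal values $\Lambda_i=1/\bra{\psi_i}\rho^{-1}\ket{\psi_i}$, where the range drops; you only need Sherman--Morrison strictly below that threshold and a limit at the endpoint. Second, positivity of $D$ (hence well-definedness of the formulas in (d)) follows from Cauchy--Schwarz for the inner product $\langle\cdot,\cdot\rangle_{\rho^{-1}}$ together with linear independence of $\ket{\psi_1},\ket{\psi_2}$, and is worth stating explicitly. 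Your caveat that the Lagrange point may fall outside the nonnegative quadrant (forcing a retreat to an edge and hence to case (b)) is exactly right; the lemma as stated here and in \cite{Lewenstein98} tacitly assumes the interior case.
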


\begin{theorem}[\cite{Lewenstein98}]
\label{BSATheorem}
Given the set $V$ of product  vectors $\ket{e,f} \in \mathcal{R}(\rho)$, the matrix $\rho_S=\sum\nolimits_k \Lambda_k P_k$ is the OSA of $\rho$ iff
\begin{enumerate}[(a)]
\item all $\Lambda_k$ are maximal with respect to $\rho_m = \rho-\sum \nolimits _{k \neq m} \Lambda_k P_k$, and to the projector $P_m$
\item all pairs $(\Lambda_m,\Lambda_l)$ are maximal with respect to $\rho_{m l}=\rho - \sum \nolimits _ {k \neq m,l}\Lambda_k P_k$, and to the projection operators $(P_m, P_l)$.
\end{enumerate}
\end{theorem}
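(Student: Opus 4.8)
The plan is to read the statement as a characterization of the global maximizer of the linear functional $\mathrm{Tr}(\rho_S)=\sum_k \Lambda_k$ over the convex feasible set
\begin{equation*}
\mathcal{F}=\Big\{(\Lambda_k)_k : \Lambda_k \geq 0,\ \rho-\sum_k \Lambda_k P_k \geq 0\Big\},
\end{equation*}
where the $P_k=\ket{\psi_k}\bra{\psi_k}$ run over the product vectors of $V\subseteq \mathcal{R}(\rho)$. Since $\mathcal{F}$ is the intersection of the positive-semidefinite cone (pulled back by an affine map) with the positive orthant, it is convex, and the objective is linear; hence a feasible point is a global maximizer if and only if no feasible direction increases the objective. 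The entire content of the theorem is that, for \emph{rank-one} subtractions, ``no feasible direction'' can be tested using only one- and two-coordinate variations.

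Necessity (the ``only if'') I would dispatch first, as it is essentially definitional. If $\rho_S$ is the OSA then $\sum_k \Lambda_k$ is globally maximal, so it cannot be increased by changing a single weight $\Lambda_m$ with the others frozen, which is precisely the assertion that each $\Lambda_m$ is maximal with respect to $\rho_m=\rho-\sum_{k\neq m}\Lambda_k P_k$ and $P_m$ (Definition 1); likewise it cannot be increased by changing a pair $(\Lambda_m,\Lambda_l)$ with the rest frozen, which is the maximality of the pair with respect to $\rho_{ml}$ and $(P_m,P_l)$ (Definition 2). Conditions (a) and (b) are thus just the restrictions of the global maximization to coordinate and coordinate-pair slices.

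The substance is sufficiency. Assuming (a) and (b), the route I would actually take is to build a dual certificate: an operator $W \geq 0$ with $W\tilde{\rho}_E=0$ (where $\tilde{\rho}_E=\rho-\rho_S$), with $\bra{\psi_k}W\ket{\psi_k}=1$ for every $k$ with $\Lambda_k>0$, and $\bra{\psi_k}W\ket{\psi_k}\geq 1$ for the remaining product vectors of $V$. Such a $W$ certifies global optimality, since for any feasible family $(\Lambda_k')$ one has $\sum_k \Lambda_k' \leq \sum_k \Lambda_k'\bra{\psi_k}W\ket{\psi_k}=\mathrm{Tr}(W\rho_S') \leq \mathrm{Tr}(W\rho)=\mathrm{Tr}(W\rho_S)=\sum_k \Lambda_k$, the middle inequality using $W\geq 0$ and $\rho-\rho_S'\geq 0$, and the last two equalities using $W\tilde{\rho}_E=0$ together with the normalization on active indices. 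The construction of $W$ is where the hypotheses enter: converting the maximality statements into explicit form via Lemmas~\ref{maximal1} and \ref{maximal2}, condition (a) fixes the value $\bra{\psi_k}W\ket{\psi_k}$ for each active index through a zero eigenvector of $\tilde{\rho}_E$ overlapping $\ket{\psi_k}$, while condition (b)---through formula (\ref{lemma2}) and its nondegeneracy---controls the off-diagonal Gram data $\bra{\psi_m}\rho_{ml}^{-1}\ket{\psi_l}$ that encode how two subtractions compete on the kernel $K=\ker\tilde{\rho}_E$. Equivalently, one can argue by contradiction: a competing $(\Lambda_k')$ with $\sum_k \Lambda_k'>\sum_k \Lambda_k$ furnishes an improving direction $\delta_k=\Lambda_k'-\Lambda_k$ whose only obstruction to feasibility lives on $K$, and (a), (b) force the governing quadratic form on $K$ to forbid exactly this increase.

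The main obstacle is this last reduction: proving that one- and two-coordinate maximality suffice to exclude every global improving direction, i.e. that the pairwise data glue into a single \emph{positive} $W$ rather than merely a Hermitian one. This is where the rank-one nature of the $P_k$ is indispensable---the second-order interaction between any two subtractions is governed by the single scalar $\bra{\psi_m}\rho^{-1}\ket{\psi_l}$, so the curvature of $\partial\mathcal{F}$ is encoded entirely in $2\times 2$ Gram blocks, and condition (b) is exactly the statement that every such block is already optimized. I expect the verification that these pairwise-consistent certificates assemble into a global $W\geq 0$, supported on $K$ and satisfying complementary slackness $W\tilde{\rho}_E=0$, to be the delicate step, handled by combining the explicit maximality formulas of Lemmas~\ref{maximal1} and \ref{maximal2} with the geometry of the semidefinite boundary.
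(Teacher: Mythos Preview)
The paper does not prove this theorem. Theorem~\ref{BSATheorem} is stated in Appendix~\ref{App:A} as a quoted result from \cite{Lewenstein98}, alongside Lemmas~\ref{maximal1} and \ref{maximal2}, purely to provide the criteria that are \emph{applied} in the proof of Theorem~\ref{BSASymTheorem} (Appendix~\ref{App:C}). There is therefore no ``paper's own proof'' to compare your proposal against.

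On the merits of your sketch: your framing as maximization of a linear functional over a convex set is correct, and your handling of necessity is fine. For sufficiency, however, you have not actually closed the argument. You correctly isolate the crux---that pairwise maximality must be promoted to a global certificate $W\geq 0$ supported on $\ker\tilde{\rho}_E$---but then write ``I expect the verification \ldots\ to be the delicate step'' without performing it. That is precisely the nontrivial content of the theorem: there is no general principle guaranteeing that a collection of $2\times 2$ optimized Gram blocks assembles into a single positive-semidefinite operator, and you have not invoked any structural property (of rank-one subtractions, of the kernel geometry, or otherwise) that would force it here. The original proof in \cite{Lewenstein98} does not proceed via a dual witness; it argues more directly that conditions (a) and (b) force the remainder $\tilde{\rho}_E$ to have no product vector of $V$ in its range (each maximal single subtraction drops the rank by one and expels the corresponding vector from the range, while the pair condition prevents trading weight between projectors to recover any), so nothing further can be subtracted. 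If you want to complete your dual-certificate route, you would need to exhibit $W$ explicitly from the data of Lemmas~\ref{maximal1}--\ref{maximal2} and verify $W\geq 0$, not merely assert that the pairwise conditions ``encode'' it.
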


%%%%%%%%%%%%%%%%%%%%%%%%%%%%%%%%%%%%%%%%%%%%%%%%%%%%%%%%%%%%%%%%%%%%%%%%%%%%%%%%%%%%%%%%%%%%%%%%%%
%				APPENDIX B
%%%%%%%%%%%%%%%%%%%%%%%%%%%%%%%%%%%%%%%%%%%%%%%%%%%%%%%%%%%%%%%%%%%%%%%%%%%%%%%%%%%%%%%%%%%%%%%%%%
\section{} \label{App:AppendixB}
\label{App:B}
Here we give the steps needed to convert the initial set of inequalities (\ref{NPTConds}) to the more convenient set (\ref{Entang_conds}).
First we need to prove the following relation
\begin{equation}
\label{1ineq}
{p_{N - j}} > {\left( {\frac{{j + 1}}{{N - j}}} \right)^{\frac{j}{{j + 1}}}}{\left( {C_{N - j}^N} \right)^{\frac{1}{{j + 1}}}}p_{N - (j + 1)}^{\frac{j}{{j + 1}}}p_N^{\frac{1}{{j + 1}}},
\end{equation}
where $C_k^N$ is the binomial coefficient.
To do that we will make use of the induction method. The expression with $j=1$ (i.e. $k=N-1$) can be obtained directly from (\ref{NPTConds}). It remains to prove the implication $j-1 \Rightarrow j$.
Taking as true the inequality (\ref{1ineq}) for $j-1$:
\begin{equation*}
{p_{N - (j - 1)}} > {\left( {\frac{j}{{N - (j - 1)}}} \right)^{\frac{{j - 1}}{j}}}{\left( {C_{N - (j - 1)}^N} \right)^{\frac{1}{j}}}p_{N - j}^{\frac{{j - 1}}{j}}p_N^{\frac{1}{j}}
\end{equation*}
and inserting it into the inequality (\ref{NPTConds}) with $k=N-j$,
\begin{equation*}
p_{N - j} > {\left[ {\frac{{(N - j + 1)(j + 1)}}{{j(N - j)}}} \right]^{\frac{1}{2}}}p_{N - (j + 1)}^{1/2}p_{N - (j - 1)}^{1/2},
\end{equation*} 
one obtains, after isolating $p_{N-j}$,
\begin{align*}
&{p_{N - j}} >\\
& {\left( {\frac{{j + 1}}{{N - j}}} \right)^{\frac{j}{{j + 1}}}}{\left[ {\frac{{N - (j - 1)}}{j}C_{N - (j + 1)}^N} \right]^{\frac{1}{{j + 1}}}}p_{N - (j + 1)}^{\frac{j}{{j + 1}}}p_N^{\frac{1}{{j + 1}}},
\end{align*}
which becomes (\ref{1ineq}) due to the combinatorial relationship
\begin{equation}
\label{Cn}
C_{N - j}^N = \frac{{N - (j - 1)}}{j}C_{N - (j - 1)}^N.
\end{equation}
Once we have proved this intermediate set of inequalities, we can focus on showing our goal inequalities 
\begin{equation}
\label{2ineq}
{p_k} > C_k^Np_0^{\frac{{N - k}}{N}}p_N^{\frac{k}{N}}.
\end{equation}
To this end, we will make use again of the induction method. The case $k=1$ easily comes from (\ref{1ineq}) with $j=N-1$. Now, supposing valid the case $k-1$ of (\ref{2ineq}):
\begin{equation}
{p_{k - 1}} > C_{k - 1}^Np_0^{\frac{{N - k + 1}}{N}}p_N^{\frac{{k - 1}}{N}},
\end{equation}
and inserting it into the inequality $j=N-1$ of (\ref{1ineq})
\begin{equation*}
{p_k} > {\left( {\frac{{N - k + 1}}{k}} \right)^{\frac{{N - k}}{{N - k + 1}}}}{\left( {C_k^N} \right)^{\frac{1}{{N - k + 1}}}}p_{k - 1}^{\frac{{N - k}}{{N - k + 1}}}p_N^{\frac{1}{{N - k + 1}}}
\end{equation*}
one obtains the desired set of inequalities (\ref{2ineq}) after making use of the combinatorial relationship (\ref{Cn}).

%%%%%%%%%%%%%%%%%%%%%%%%%%%%%%%%%%%%%%%%%%%%%%%%%%%%%%%%%%%%%%%%%%%%%%%%%%%%%%%%%%%%%%%%%%%%%%%%%%
%				APPENDIX C
%%%%%%%%%%%%%%%%%%%%%%%%%%%%%%%%%%%%%%%%%%%%%%%%%%%%%%%%%%%%%%%%%%%%%%%%%%%%%%%%%%%%%%%%%%%%%%%%%%
\section{} \label{App:AppendixC}
\label{App:C}

 \subsubsection*{Proof of Theorem \ref{BSASymTheorem} }

The procedure to prove this theorem is to check that the conditions given by Theorem \ref{BSATheorem} in Appendix \ref{App:A} are fulfilled, confirming that our ansatz $\Lambda_k=\frac{1}{2N}$, for all $k$ is indeed the maximal one.\\
First condition of Theorem \ref{BSASymTheorem} imposes that each $\Lambda_m$ must be maximal with respect to $\rho_m$ and $P_m$. This condition of maximality is given by  Lemma \ref{maximal1}. Hence we need to compute $\bra{p_m}\rho_m^{-1}\ket{p_m}$. It follows immediately from Lemma \ref{LemmaM} that, for every $m$,
\begin{equation}
\Lambda_m=\frac{1}{\bra{p_m}\rho_m^{-1}\ket{p_m}}=\frac{1}{2N}.
\end{equation}

\noindent Second condition of Theorem \ref{BSATheorem} imposes that all pair weights $(\Lambda_m,\Lambda_l)$ must be maximal with respect to $\rho_{ml}$ and the projectors $P_m$ and $P_l$.\\
Conditions for maximality of pairs $(\Lambda_m,\Lambda_l)$ are provided by Lemma \ref{maximal2}. We divide this section of the proof in two parts regarding the parity of $|m-l|$.
 
 \indent$\bullet$ When $|m-l|$ is odd:\\
We need to use part (a) of Lemma \ref{InverseML}. It assures that $\bra{p_l}\rho_{ml}^{-1}\ket{p_m} = 0$ for every $m,l$. Hence, to prove maximality of  pairs $(\Lambda_m,\Lambda_l)$ we need to focus on part (c) of Theorem \ref{BSATheorem}. This part establishes that maximal $\Lambda_i$ must be
\begin{equation}
\Lambda_i=\frac{1}{\bra{p_i}\rho_{ml}^{-1}\ket{p_i}},\,\,\,\,\,\,\,\,i=m,l.
\end{equation}
Once again, part (a) of Lemma \ref{InverseML} gives us immediately $\Lambda_i = \tfrac{1}{2N}$, for every $i$, in agreement with our ansatz.

 \indent$\bullet$ When $|m-l|$ is even:\\
In this case we have $\bra{p_l} \rho_{ml}^{-1}\ket{p_m} \neq 0$, so we need to follow part (d) of Theorem \ref{BSATheorem}. Accordingly, maximal $\Lambda_l$ must take the form 
\begin{equation}
\label{LambdaEven}
\Lambda_l=\frac{\bra{p_m}\rho_{ml}^{-1}\ket{p_m}-|\bra{p_m} \rho_{ml}^{-1}\ket{p_l}|}{\bra{p_m}\rho_{ml}^{-1}\ket{p_m}\bra{p_l}\rho_{ml}^{-1}\ket{p_l}-|\bra{p_m}\rho_{ml}^{-1}\ket{p_l}|^2}.
\end{equation}

\noindent Before dealing with (\ref{LambdaEven}) it is important to remark that $\bra{p_m}\rho_{ml}^{-1}\ket{p_l} \in \mathbb{R}$, which can be seen from part (b) of Lemma \ref{InverseML} since hermiticity of $\rho_{ml}$ imposes $\bra{p_m}\rho_{ml}^{-1}\ket{p_m} \in \mathbb{R}$. Also note that it is positive since $\rho_{ml}$ is positive semidefinite, so we can remove the complex modulus in (\ref{LambdaEven}).
Naming $A$ the numerator and $B$ the denominator of (\ref{LambdaEven}), we have that
\begin{align}
A &=\bra{p_m}\rho_{ml}^{-1}\ket{p_m}-\bra{p_m}\rho_{ml}^{-1}\ket{p_l} \\
&= 2N - 2\bra{p_m}\rho_{ml}^{-1}\ket{p_l}, 
\end{align}
and, on the other hand,
\begin{align}
B &= \bra{p_m}\rho_{ml}^{-1}\ket{p_m}\bra{p_l}\rho_{ml}^{-1}\ket{p_l}-\bra{p_m}\rho_{ml}^{-1}\ket{p_l}^2\\
&= \left(2N - \bra{p_m}\rho_{ml}^{-1}\ket{p_l}\right)\left(2N-\bra{p_m}\rho_{ml}^{-1}\ket{p_l}\right)\\
&-\bra{p_m}\rho_{ml}^{-1}\ket{p_l}^2 \\
& = 2N\left(2N- 2\bra{p_m}\rho_{ml}^{-1}\ket{p_l} \right) = 2NA,
\end{align}
 where we have used part (b) of Lemma \ref{InverseML}.
Finally we can conclude that $\Lambda_l = \frac{A}{B}= \frac{1}{2N}$, for all $l$, in agreement again with our ansatz.
This completes the proof of maximality. It remains to be shown that the OSA $\rho_{BSA}$ is, in addition, the BSA of $\rho$. To do that, it suffices to show that no more projectors $Q$ other than those included in $V= \{P_0,P_1,\dots,P_{2N-1} \}$ can be subtracted from $\tilde{\rho}_E = \rho - \tilde\rho_{BSA}$ maintaining the positivity of the difference, or put in other words, that the range of $\tilde{\rho}_E$ does not contain product vectors.
To do that, we construct a projector onto generic product vector $Q=\ket{p}\bra{p}$, where $\ket{p}$ is given by (\ref{prod_vec}). It can be shown that $\Delta\rho \equiv \tilde{\rho}_E - \epsilon Q$  has  negative diagonal entries $(\Delta \rho)_{00}=-\epsilon X^{2N}$ and/or $(\Delta \rho)_{NN}=-\epsilon Y^{2N}$ for all $\epsilon  > 0$ and for any value of $X$ and $Y$, and thus $Q$ cannot be subtracted from $\tilde \rho_E$.

%%%%%%%%%%%%%%%%%%%%%%%%%%%%%%%%%%%%%%%%%%%%%%%%%%%%%%%%%%%%%%%%%%%%%%%%%%%%%%%%%%%%%%%%%%%%%%%%%%
%				BIBLIOGRAPHY
%%%%%%%%%%%%%%%%%%%%%%%%%%%%%%%%%%%%%%%%%%%%%%%%%%%%%%%%%%%%%%%%%%%%%%%%%%%%%%%%%%%%%%%%%%%%%%%%%%

\end{document}